\newtheorem{theorem}{Theorem}
\newtheorem{lemma}{Lemma}
\newtheorem{defn}{Definition}
\newtheorem{remark}{Remark}
\newtheorem*{problem*}{Problem}
\newcommand{\cmmnt}[1]{}
\title{\LARGE \bf
Leaderless collective motions in affine formation control
}
\author{Hector Garcia de Marina, Juan Jimenez Castellanos and Weijia Yao  
	\thanks{Hector Garcia de Marina and Juan Jimenez Castellanos are with the Faculty of Physics, Universidad Complutense de Madrid, 28040 Madrid, Spain. Weijia Yao is with ENTEG, University of Groningen, The Netherlands. {\tt\small hgarciad@ucm.es}, {\tt\small juan.jimenez@fis.ucm.es}, {\tt\small w.yao@rug.nl}.}%
}
\begin{document}
	\maketitle
	\thispagestyle{empty}
	\pagestyle{empty}

	\begin{abstract}
This paper proposes a novel distributed technique to induce collective motions in affine formation control. Instead of the traditional leader-follower strategy, we propose modifying the original weights that build the Laplacian matrix so that a designed steady-state motion of the desired shape emerges from the agents' local interactions. The proposed technique allows a rich collection of collective motions such as rotations around the centroid, translations, scalings, and shearings of a reference shape. These motions can be applied in useful collective behaviors such as \emph{shaped} consensus (the rendezvous with a particular shape), escorting one of the team agents, or area coverage. We prove the global stability and effectiveness of our proposed technique rigorously, and we provide some illustrative numerical simulations.
	\end{abstract}
	
	
	\section{Introduction}
The control of robot swarms is one of the grand challenges in robotics \cite{yang2018grand}. In particular, the display and maneuvering of geometrical patterns by mobile robots has been identified as one of the elementary blocks in swarm robotics \cite{Oh2015}. For example, the precise control of the geometrical pattern in the formation enables individuals to estimate the gradient of a field \cite{brinon2019multirobot}, to localize themselves relatively in \emph{GPS-denied} scenarios \cite{anderson2008rigid}, or to coordinate their relative motion \cite{de2019flexible}.

This paper shows how to coordinate the collective motion of a distributed formation while displaying the affine transformation of a reference shape. The \emph{affine formation control} algorithm guides the agents to converge to a static and arbitrary affine transformation of a reference shape \cite{lin2016necessary}. Such an algorithm is completely distributed where neighboring agents control \emph{tensions}, i.e., the weighted sum of all the sensed relative positions by an agent equals zero. This weighted sum of relative positions leads the weighted Laplacian matrix to emerge naturally during the analysis of the algorithm. Such weights can be designed so that the Laplacian is symmetric and positive semi-definite to assist the convergence to the \emph{affine shape} \cite{zhao2018affine}. However, the formation converges to a static position.

To maneuver the team of agents, classical strategies, such as the \emph{leader-follower}, have been proposed in \cite{zhao2018affine} although it requires an extra control layer superposed to the formation control algorithm. In this paper, we propose a different strategy to induce the collective motion of the formation. In particular, we show that the modification of weights, designed for the achievement of the \emph{static shape}, is sufficient to induce a collective motion. Such a collective behavior consists of linear combinations of \emph{affine motions} corresponding to affine transformations of the reference shape, such as rotations, shearings, translations, or scalings. A similar technique has been shown with 2D formations where agents encode their positions as complex numbers \cite{de2020distributed} or they exploit rotation matrices \cite{Marina2017}. In contrast to the complex formation control \cite{lin2014distributed}, the affine formation control enables us to jump from planar geometrical shapes ($\mathbb{C}^1$) to shapes in arbitrary dimensions ($\mathbb{R}^n$).

This paper has been divided into eight sections. Section \ref{sec: prel} explains the preliminaries including the adopted notation, employes concepts in graph theory, and the notion of \emph{reference} and \emph{desired} shape. Section \ref{sec: aff} reviews the algorithm for controlling affine formations. Sections \ref{sec: lap} and \ref{sec: man} present the maneuvering technique and how to design the motion parameters responsible for the collective behavior. Sections \ref{sec: sta} and \ref{sec: sim} address the stability analysis and illustrates the motions with numerical simulations. Finally, Section \ref{sec: dis} discusses future work and concludes the paper.

\section{Preliminaries}
\label{sec: prel}
\subsection{Notation and graph theory}
We consider $n \in \mathbb{N}$ mobile agents. We denote by $||x||$ the Euclidean norm of the vector $x\in\mathbb{R}^p, p\in\mathbb{N}$. Given a set $\mathcal{X}$, we denote by $|\mathcal{X}|$ its cardinality. We denote by $\mathbf{1}_p\in\mathbb{R}^p, p\in\mathbb{N}$, the all-one column vector. Finally, given a matrix $A\in\mathbb{R}^{p\times q},\, p,q\in\mathbb{N}$, we define the operator $\overline A := A \otimes I_m \in \mathbb{R}^{pm \times qm}$, where $\otimes$ denotes the Kronecker product, and $I_m\in\mathbb{R}^{m\times m}$ is the identity matrix.

A \emph{graph} $\mathcal{G} = (\mathcal{V}, \mathcal{E})$ consists of two non-empty sets: the node set $\mathcal{V} = \{1,2,\dots,n\}$, and the edge set $\mathcal{E} \subseteq (\mathcal{V}\times\mathcal{V})$. In this work, we only consider the special case of \emph{undirected} graphs. An undirected graphs is a \emph{bidirectional} graph where if the edge $(i,j)\in\mathcal{E}$, then the edge $(j,i)\in\mathcal{E}$ as well. The set $\mathcal{N}_i$ containing the neighbors of the node $i$ is defined by $\mathcal{N}_i:=\{j\in\mathcal{V}:(i,j)\in\mathcal{E}\}$. Let $w_{ij}\in\mathbb{R}$ be a weight associated with the edge $(i,j)\in\mathcal{E}$, then the \emph{Laplacian} matrix $L\in\mathbb{R}^{n\times n}$ of $\mathcal{G}$ is defined as
\begin{equation}
	l_{ij} := \begin{cases}\sum_{k\in\mathcal{N}_i}w_{ik} & \text{if} \quad i = j \\
		-w_{ij} & \text{if} \quad i \neq j \wedge j\in\mathcal{N}_i \\
		0 & \text{if} \quad i \neq j \wedge j\notin\mathcal{N}_i.
	\end{cases}
	\label{eq: L}
\end{equation}
We assume that $\mathcal{G}$ is connected so that $L\mathbf{1}_n = 0$. For an undirected graph, we choose one of the two arbitrary directions for each pair of neighboring nodes to construct the ordered set of edges $\mathcal{Z}$. For an arbitrary edge $\mathcal{Z}_k = (\mathcal{Z}_k^{\text{head}},\mathcal{Z}_k^{\text{tail}}), k\in\{1,\dots,\frac{|\mathcal{E}|}{2}\}$, we call its first and second element the \emph{head} and the \emph{tail} respectively. From such an ordered set, we construct the following \emph{incidence matrix} $B\in\mathbb{R}^{|\mathcal{V}|\times |\mathcal{Z}|}$ that satisfies $B^T\mathbf{1}_n = 0$:
\begin{equation}
	b_{ik} := \begin{cases}+1 \quad \text{if} \quad i = {\mathcal{Z}}^{\text{tail}}_k \\
		-1 \quad \text{if} \quad i = \mathcal{Z}^\text{head}_k \\
		0 \quad \text{otherwise.}
	\end{cases}
	\label{eq: B}
\end{equation}

\subsection{Affine desired shape}
Each agent $i\in\mathcal{V}$ has a position $p_i\in\mathbb{R}^m, \, m\in\mathbb{N}$. We stack all the positions $p_i$ in a single vector $p\in\mathbb{R}^{mn}$ and we call it \emph{configuration}. We define a \emph{framework} $\mathcal{F}$ as the pair $(\mathcal{G}, p)$, where we assign each agent's position $p_i$ to the node $i\in\mathcal{V}$, and the graph $\mathcal{G}$ establishes the set of neighbors $\mathcal{N}_i$ for each agent $i$.

We choose an arbitrary configuration of interest or \emph{reference shape $p^*$} for the team of agents, and we split it as
\begin{equation}
	p^* = \left(\mathbf{1}_n \otimes p_{\text{c.m.}}\right) + p^*_c,
	\label{eq: pstar}
\end{equation}
where $p_{\text{c.m.}}\in\mathbb{R}^{m}$ is the position of the \emph{center of mass} of the configuration and $p_c^*\in\mathbb{R}^{mn}$, starting from the center of mass, gives the \emph{appearance} to the formation as in the example shown in Figure \ref{fig: pstar}. Without loss of generality, and for the sake of simplicity, we set $p_{\text{c.m.}} = 0$ in (\ref{eq: pstar}). In this paper we assume that $p^*$ is \emph{generic} \cite{gortler2010characterizing}. For example, in 3D we do not set $p^*$ on the same plane.

\begin{figure}
\centering
\begin{tikzpicture}[line join=round]
\filldraw(0,0) circle (2pt);
\filldraw(1.5,0) circle (2pt);
\filldraw(1.5,1.5) circle (2pt);
\filldraw(0,1.5) circle (2pt);
\filldraw(2.25,0.75) circle (2pt);
\filldraw(-0.75,0.75) circle (2pt);
\filldraw(0.75,1.5) circle (2pt);
\filldraw(0.75,0) circle (2pt);
\draw[draw=black,arrows=->](-2,0)--(-1.5,0);
\draw[draw=black,arrows=->](-2,0)--(-2,0.5);
\draw[draw=black,arrows=->](-2,0)--(0.75,0.75);
\draw[draw=black,arrows=->](0.75,0.75)--(0+0.05,1.5-0.05);
\draw[draw=black,arrows=->](0.75,0.75)--(0+0.05,0+0.05);
\draw[draw=black,arrows=->](0.75,0.75)--(1.5-0.05,1.5-0.05);
\draw[draw=black,arrows=->](0.75,0.75)--(1.5-0.05,0+0.05);
\draw[draw=black,arrows=->](0.75,0.75)--(2.25-0.05,0.75);
\draw[draw=black,arrows=->](0.75,0.75)--(-0.75-0.05,0.75);
\draw[draw=black,arrows=->](0.75,0.75)--(0.75,1.5-0.05);
\draw[draw=black,arrows=->](0.75,0.75)--(0.75,+0.05);
\node at (-1.95,-0.25) {\small $O_g$ \normalsize};
\node at (0-0.2,1.2) {\small $p_{c_1}^*$ \normalsize}; 
\node at (0-0.2,0.75-0.5) {\small $p_{c_7}^*$ \normalsize}; 
\node at (1.5+0.2,0.75+0.45) {\small $p_{c_3}^*$ \normalsize}; 
\node at (1.5+0.2,0.75-0.4) {\small $p_{c_5}^*$ \normalsize}; 
\node at (2.25+0.2,0.75-0.2) {\small $p_{c_4}^*$ \normalsize}; 
\node at (-0.75-0.2,0.75+0.3) {\small $p_{c_8}^*$ \normalsize}; 
\node at (0.75+0.25,1.5-0.2) {\small $p_{c_2}^*$ \normalsize}; 
\node at (0.75+0.25,-0+0.3) {\small $p_{c_6}^*$ \normalsize}; 
\node at (-1.25,0.4) {\small $p_{\text{c.m.}}$ \normalsize}; 
\end{tikzpicture}
	\caption{Example of a reference 2D shape $p^* =  (\mathbf{1}_8 \otimes p_{\text{c.m.}}) + [p_{c_1}^{*T} \, \dots \, p_{c_8}^{*T}]^T$, where $p_{\text{c.m.}}$ is the center of mass of the reference shape.}
\label{fig: pstar}
\end{figure}
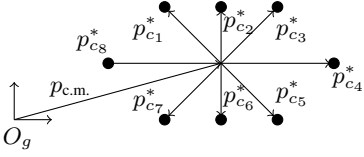

We now define the concept of \emph{desired shape} constructed from the reference shape $p^*$:
\begin{defn}
	The configuration is at the \emph{desired shape} when
	\begin{equation}
		p\in\mathcal{S}:=\{p = (I_n \otimes A)p^* + (\mathbf{1}_{n}\otimes\, b) \, | \, A\in\mathbb{R}^{m\times m}, b\in\mathbb{R}^m\}.
	\label{eq: dshape}
\end{equation}
\end{defn}
Note that the set $\mathcal{S}$ corresponds to all the affine transformations of the reference shape $p^*$; hence, the name of \emph{affine formation control} if our target is $p(t)\to\mathcal{S}$ as $t\to\infty$. The original work \cite{lin2016necessary} proposes an algorithm so that $p(t)$ converges to a point in $\mathcal{S}$. In this paper, we show how to modify slightly such an algorithm so that $p(t)$ converges to an \emph{orbit} or trajectory in $\mathcal{S}$.

\section{Agent dynamics and stabilization of an affine static shape}
\label{sec: aff}
The agents are modelled as point-mass particles where we can command their velocities, i.e.,
\begin{equation}
	\dot p_i = u_i, \quad i\in\mathcal{V},
	\label{eq: dyn}
\end{equation}
where $u_i\in\mathbb{R}^m$ is the control input to the agent $i$. Similarly, we can write the following compact form
\begin{equation}
\dot p = u,
\end{equation}
where $p,u\in\mathbb{R}^{nm}$ are the stacked vectors of positions and control actions respectively.

Since we want the nature of our algorithm to be distributed, then $u_i$ can only be a function of relative positions $z_{ij}:=(p_i-p_j),\, j\in\mathcal{N}_i$. In particular, the original algorithm that steers $p(t)\to\mathcal{S}$ has the following form \cite{lin2016necessary}
\begin{equation}
	u_i = -h \sum_{j\in\mathcal{N}_i} w_{ij} (p_i - p_j) = -h\sum_{j\in\mathcal{N}_i} w_{ij} z_{ij},
	\label{eq: uLaff}
\end{equation}
where $h\in\mathbb{R}_+$ is an arbitrary positive gain, and $w_{ij} = w_{ji}\in\mathbb{R}$ are weights, whose design together with the graph $\mathcal{G}$ will be explained shortly, so that the Laplacian matrix $L$ is positive semi-definite \cite{lin2016necessary,zhao2018affine}. Indeed, if we write the compact form of (\ref{eq: uLaff})
\begin{equation}
\dot p = -h\overline Lp,
	\label{eq: affcom}
\end{equation}
then $p(t) \to \operatorname{Ker}\{\overline L\}$. The kernel of $\overline L$ is the set $\mathcal{S}$ when we force the weights, besides the trivial solution, to satisfy
\begin{equation}
	\sum_{j\in\mathcal{N}_i} w_{ij} (p_{i}^* - p_{j}^*) = 0,\quad \forall i\in\mathcal{V},
	\label{eq: wdes}
\end{equation}
and such a set of weights can construct a positive semi-definite Laplacian matrix if and only if the framework is \emph{generically} and \emph{universally rigid} \footnote{Given a framework $(\mathcal{G}, p^*)$ with $p^*$ being generic, we say it is generically and universally rigid if for $(\mathcal{G}, q)$ with $q\in\mathbb{R}^s, \,s\in\mathbb{N}$, where $||p_i^* - p_j^*|| = ||q_i - q_j||, \forall (i,j) \in\mathcal{E}$, then it also implies that $||p_i^* - p_j^*|| = ||q_i - q_j||, \forall i,j \in\mathcal{V}$.} \cite{lin2016necessary}. This condition forces the number of agents to satisfy $n \geq m + 2$. 
We refer to \cite{kelly2014class} on how to build such frameworks in 2D and 3D, and to \cite{zhao2018affine} on how to compute the weights. If the framework is \emph{globally rigid} (relaxed condition) \cite{anderson2008rigid}, then $L$ can be made symmetric while the weights satisfy (\ref{eq: wdes}). However, in order to ensure that non-zero eigenvalues of $L$ are within the right-half complex plane, then we need to design a gain $k_i\in\mathbb{R}\setminus\{0\}$ for each agent $i$ that modifies (\ref{eq: uLaff})
as $u_i = -hk_i\sum_{j\in\mathcal{N}_i} w_{ij} z_{ij}$, or in compact form $\dot p = -h\overline K\overline Lp$, where $K := \operatorname{diag}\{k_1,\dots,k_n\}$. Note that such a matrix $K$ always exists so that $KL$ does not have eigenvalues with negative real part \cite{friedland1975inverse}.

\section{Modified Laplacian matrix}
\label{sec: lap}

In this section, we are going to show how to modify a (non-unique) subset of weights $w_{ij}$ in (\ref{eq: uLaff}) such that the configuration $p(t)$ converges as $t\to\infty$ to a steady-state motion within the desired shape $\mathcal{S}$.

Let us consider the following modified weights
\begin{equation}
	\tilde w_{ij} = w_{ij} - \frac{\kappa}{h}\mu_{ij}, \quad (i,j)\in\mathcal{E},
\label{eq: wmod}
\end{equation}
where the \emph{motion parameters} $\mu_{ij}\in\mathbb{R}$ will be designed shortly in Subsection \ref{sec: mus} for the translation, rotation, scaling, and shearing of the formation, and $\kappa\in\mathbb{R}$ will regulate the speed of the collective motion. The gain $h$ is included in (\ref{eq: wmod}) to compensate itself once the new modified Laplacian is multiplied by $h$ as in (\ref{eq: affcom}). Since our maneuvering technique is distributed, then $\mu_{ij} = 0$ if $j\notin\mathcal{N}_i$, and in general $\mu_{ij} \neq \mu_{ji}$.

Similarly to the incidence matrix $B$ in (\ref{eq: B}), consider again the ordered set of edges $\mathcal{Z}$, and let us define the components of the following matrix $M\in\mathbb{R}^{|\mathcal{V}|\times |\mathcal{Z}|}$
\begin{equation}
	m_{ik} := \begin{cases}\mu_{i\mathcal{Z}^\text{head}_k} \quad \text{if} \quad i = \mathcal{Z}^\text{tail}_k \\
		-\mu_{i\mathcal{Z}^\text{tail}_k} \quad \text{if} \quad i = \mathcal{Z}^\text{head}_k \\
		0 \quad \text{otherwise.}
	\end{cases}.
	\label{eq: M}
\end{equation}

The definition (\ref{eq: M}) enables us to write the modified Laplacian matrix from the modified weights (\ref{eq: wmod}) in compact form as
\begin{equation}
	\tilde L = L - \frac{\kappa}{h}MB^T.
	\label{eq: Ltilde}
\end{equation}

\section{Affine maneuvering}
\label{sec: man}
\subsection{Motion parameters design}
\label{sec: mus}
The motion parameters $\mu_{ij}$ are designed in a similar way as the weights $w_{ij}$ in (\ref{eq: wdes}), i.e., they must satisfy the following linear contraints
\begin{equation}
	{^b}v_i^* = \sum_{j\in\mathcal{N}_i}\mu_{ij}\, ({^b}p_i^* - {^b}p_j^*)  = \sum_{j\in\mathcal{N}_i}\mu_{ij}\,{^b}z_{ij}^*, \, \forall i\in\mathcal{V},
\label{eq: vbi}
\end{equation}
where ${^b}v_i^*\in\mathbb{R}^m$ is the desired velocity for each agent $i$ so that the collective motion is compatible with having $p(t)\in\mathcal{S}$ if we start from $p^*$ as it is shown in Figure \ref{fig: sqdis}. Note that the desired agent's velocity is designed with respect to a frame of coordinates $O_b$ at the center of mass of $p^*$; hence, the $^b$ superscript. Since the desired agents' velocities are constructed from the relative positions $z^* = \overline B^Tp^*$, we will see that if all the stacked vectors in $z^*$ go under an affine transformation, then the resultant motion will be transformed equally as well. For example, we can design a circular motion (rotation) around the centroid of $p^*$. However, after applying an affine transformation to the shape described by $p^*$, then the resultant rotation will be ellipsoidal in general.

\begin{figure}
\centering
\begin{tikzpicture}[line join=round]
\begin{scope}[shift={(-4,0)}]
\draw[dashed](1.5,0)--(0,0)--(0,1.5)--(1.5,1.5)--(1.5,0)--(0,1.5);
\draw[dashed](0,0)--(1.5,1.5);
\filldraw(0,0) circle (2pt);
\filldraw(1.5,0) circle (2pt);
\filldraw(1.5,1.5) circle (2pt);
\filldraw(0,1.5) circle (2pt);
\draw[draw=black,arrows=->](.75,.75)--(1.125,.75);
\draw[draw=black,arrows=->](.75,.75)--(.75,1.125);
\draw[draw=black,arrows=->](-.75,1.0)--(-1.125,1.0);
\draw[draw=black,arrows=->](-.75,1.0)--(-.75,0.6);
	\node at (.8,.45) {\small $O_b$ \normalsize};\node at (-.75,1.2) {\small $O_g$ \normalsize};\node at (-.25,-.25) {\small $p_1^*$ \normalsize}; \node at (1.25,1.75) {\small $p_3^*$ \normalsize};\node at (-.25,1.75) {\small $p_2^*$ \normalsize}; \node at (1.25,-.25) {\small $p_4^*$ \normalsize};
\end{scope}
\begin{scope}[shift={(-1.25,0)},scale=0.35]
\filldraw(0,0) circle (2pt);
\filldraw(1.5,0) circle (2pt);
\filldraw(1.5,1.5) circle (2pt);
\filldraw(0,1.5) circle (2pt);
\draw[dashed](1.5,0)--(0,0)--(0,1.5)--(1.5,1.5)--(1.5,0)--(0,1.5);
\draw[dashed](0,0)--(1.5,1.5);
\draw[draw=black,color=blue,arrows=->](0,0)--(1,0);
\draw[draw=black,color=blue,arrows=->](1.5,0)--(2.5,0);
\draw[draw=black,color=blue,arrows=->](1.5,1.5)--(2.5,1.5);
\draw[draw=black,color=blue,arrows=->](0,1.5)--(1,1.5);
\end{scope}
\begin{scope}[shift={(-1.25,1.25)},scale=0.35]
\filldraw(0,0) circle (2pt);
\filldraw(1.5,0) circle (2pt);
\filldraw(1.5,1.5) circle (2pt);
\filldraw(0,1.5) circle (2pt);
\draw[dashed](1.5,0)--(0,0)--(0,1.5)--(1.5,1.5)--(1.5,0)--(0,1.5);
\draw[dashed](0,0)--(1.5,1.5);
\draw[draw=black,color=blue,arrows=->](0,0)--(0,1);
\draw[draw=black,color=blue,arrows=->](1.5,0)--(1.5,1);
\draw[draw=black,color=blue,arrows=->](1.5,1.5)--(1.5,2.5);
\draw[draw=black,color=blue,arrows=->](0,1.5)--(0,2.5);
\end{scope}
\begin{scope}[shift={(0,0)},scale=0.35]
\filldraw(0,0) circle (2pt);
\filldraw(1.5,0) circle (2pt);
\filldraw(1.5,1.5) circle (2pt);
\filldraw(0,1.5) circle (2pt);
\draw[dashed](1.5,0)--(0,0)--(0,1.5)--(1.5,1.5)--(1.5,0)--(0,1.5);
\draw[dashed](0,0)--(1.5,1.5);
\draw[draw=black,color=red,arrows=->](0,0)--(1,-1);
\draw[draw=black,color=red,arrows=->](1.5,0)--(2.5,1);
\draw[draw=black,color=red,arrows=->](1.5,1.5)--(0.5,2.5);
\draw[draw=black,color=red,arrows=->](0,1.5)--(-1,0.5);
\end{scope}
\begin{scope}[shift={(0,1.25)},scale=0.35]
\filldraw(0,0) circle (2pt);
\filldraw(1.5,0) circle (2pt);
\filldraw(1.5,1.5) circle (2pt);
\filldraw(0,1.5) circle (2pt);
\draw[dashed](1.5,0)--(0,0)--(0,1.5)--(1.5,1.5)--(1.5,0)--(0,1.5);
\draw[dashed](0,0)--(1.5,1.5);
\draw[draw=black,color=green,arrows=->](0,0)--(-1,-1);
\draw[draw=black,color=green,arrows=->](1.5,0)--(2.5,-1);
\draw[draw=black,color=green,arrows=->](1.5,1.5)--(2.5,2.5);
\draw[draw=black,color=green,arrows=->](0,1.5)--(-1,2.5);
\end{scope}
\begin{scope}[shift={(1.25,0)},scale=0.35]
\filldraw(0,0) circle (2pt);
\filldraw(1.5,0) circle (2pt);
\filldraw(1.5,1.5) circle (2pt);
\filldraw(0,1.5) circle (2pt);
\draw[dashed](1.5,0)--(0,0)--(0,1.5)--(1.5,1.5)--(1.5,0)--(0,1.5);
\draw[dashed](0,0)--(1.5,1.5);
\draw[draw=black,color=orange,arrows=->](0,0)--(1,0);
\draw[draw=black,color=orange,arrows=->](1.5,0)--(2.5,0);
\draw[draw=black,color=orange,arrows=->](1.5,1.5)--(0.5,1.5);
\draw[draw=black,color=orange,arrows=->](0,1.5)--(-1,1.5);
\end{scope}
\begin{scope}[shift={(1.25,1.25)},scale=0.35]
\filldraw(0,0) circle (2pt);
\filldraw(1.5,0) circle (2pt);
\filldraw(1.5,1.5) circle (2pt);
\filldraw(0,1.5) circle (2pt);
\draw[dashed](1.5,0)--(0,0)--(0,1.5)--(1.5,1.5)--(1.5,0)--(0,1.5);
\draw[dashed](0,0)--(1.5,1.5);
\draw[draw=black,color=orange,arrows=->](0,0)--(0,1);
\draw[draw=black,color=orange,arrows=->](1.5,0)--(1.5,-1);
\draw[draw=black,color=orange,arrows=->](1.5,1.5)--(1.5,0.5);
\draw[draw=black,color=orange,arrows=->](0,1.5)--(0,2.5);
\end{scope}
\end{tikzpicture}
	\caption{Square formation in 2D with $\mathcal{E}$ derived from a complete graph with four nodes so that the framework is universally rigid. The reference shape $p^*$ for the affine formation control is designed with respect to a global frame of coordinates $O_g$. However, the set of affine collective motions is designed with respect to a frame of coordinates $O_b$ attached at the centroid. In fact, the desired velocity vector $^bv_i^*$ for the agent $i$ is constructed as a linear combination of the relative positions $z_{ij}, j\in\mathcal{N}_i$ employing the motion parameters $\mu_{ij}$. On the right side, we choose a non-orthogonal basis to construct the affine collective motions in 2D, e.g., two orthogonal translation velocities in blue color, spinning around the centroid in red color, scaling in green color, and two orthogonal shearing motions in orange color.}
	\label{fig: sqdis}
\end{figure}
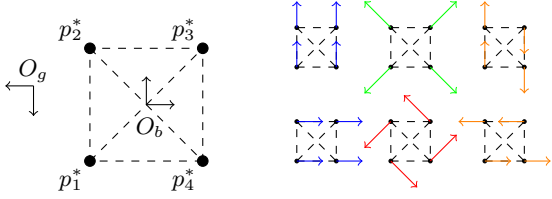

Note that in order to find the $\mu_{ij}$'s that satisfy (\ref{eq: vbi}) for an arbitrary ${^b}v_i^*$, it is sufficient for the agent $i$ to have at least $m$ neighbors with the corresponding $z_{ij}^*$ being linearly independent. Indeed, this is the case if $p^*$ is \emph{generic} and the framework globally rigid.

For example, let us show the design of $\mu_{ij}$ for the rotational motion in Figure \ref{fig: sqdis}. It is clear that $^bv_1^* = [\begin{smallmatrix}1 & -1\end{smallmatrix}]^T, {^bv}_2^* = [\begin{smallmatrix}-1 & -1\end{smallmatrix}]^T, {^bv}_3^* = [\begin{smallmatrix}-1 & 1\end{smallmatrix}]^T$ and $^bv_4^* = [\begin{smallmatrix}1 & 1\end{smallmatrix}]^T$ up to an arbritary scale (angular speed) factor. In order to satisfy (\ref{eq: vbi}), if we consider the square's side equals one, then the motion parameters of the agent $1$ for the rotational motion are $\mu_{12} = -\mu_{14} = 1$, and $\mu_{13} = 0$, since $(p^*_1 - p^*_2) = [\begin{smallmatrix}0 & -1\end{smallmatrix}]^T$, and $(p^*_1 - p^*_4) = [\begin{smallmatrix}-1 & 0\end{smallmatrix}]^T$. Note that this is not the only choice since we have not used $(p^*_1 - p^*_3) = [\begin{smallmatrix}-1 & -1\end{smallmatrix}]^T$.

We can stack (\ref{eq: vbi}) for all the agents and arrive at the following compact form
\begin{equation}
	{^b}v_f^* = \overline M\overline B^T\,{^b}p^*,
	\label{eq: vf}
\end{equation}
where ${^b}v_f^* \in \mathbb{R}^{mn}$ is the stacked vector with all the desired agents' velocities.

\subsection{Collective behaviors in 2D and 3D}
\label{sec: beh}
To assist the design for the eventual collective behavior of the formation, we can split ${^b}v_f^*$ into scaling, rotations around the centroid, translations, and shearings. In a extension of this work, we will see rigorously that such velocities form a basis for all the possible motions that keep a configuration in $\mathcal{S}$ if $p(0)\in\mathcal{S}$. In 2D, we have six velocities forming a (non-orthogonal) basis as shown in Figure \ref{fig: sqdis}, while in 3D, we have three translations, three rotations, three shearings, and one scaling collective velocities. In particular, for the 2D case, we can split $M$ in (\ref{eq: vf}) into
\begin{equation}
	M = \kappa_{t_1}M_{t_1} + \kappa_{t_2}M_{t_2} +\kappa_s M_s + \kappa_rM_r + \kappa_{s_1}M_{s_1} + \kappa_{s_2}M_{s_2},
	\label{eq: ormo}
\end{equation}
where the matrices $M_{t_{\{1,2\}}},M_r,M_s,M_{s_{\{1,2\}}}\in\mathbb{R}^{|\mathcal{V}|\times|\mathcal{Z}|}$ have their elements $\mu_{ij}$ as in (\ref{eq: M}) but designed only for the $1$ \emph{distance units}/sec translation in horizontal/vertical direction, for the $1$ \emph{current size}/sec scaling, for the $1$ radian/sec spinning, and for the $1$ \emph{distance units}/sec shearing in horizontal/vertical direction of the reference shape respectively. Finally, we can see $\kappa_{\{t_{\{1,2\}},r,s,s_{\{1,2\}}\}}\in\mathbb{R}$ as the \emph{coordinates} of the six motions (see the right side in Figure \ref{fig: sqdis}) that will define the eventual collective motion. For example, if $\kappa_{\{r,s\}} = -1$ and $\kappa_{t_{\{1,2\}}}=\kappa_{s_{\{1,2\}}}=0$, then, the eventual collective motion will be the contraction of an affine transformation of the reference shape while its centroid is fixed but the agents will orbit around it.

Looking at (\ref{eq: vf}), we have that the designed $\mu_{ij}$ for the collective motions that form a basis satisfy the following expressions, focusing both in 2D and 3D. Firstly, regarding the pure translations we have that
\begin{equation}
	\sum_{l=1}^{m}\overline M_{t_l}\overline B^Tp^* = (\mathbf{1}_n \otimes v^*),
	\label{eq: mutran}
\end{equation}
where $v^*\in\mathbb{R}^m$ is the common desired translational velocity for all the agents. Secondly, for the rotations around the centroid of $p^*$ we can check that
\begin{equation}
	\overline M_r\overline B^Tp^* = (I_n \otimes W)p^*,
	\label{eq: murot}
\end{equation}
where $W\in\mathbb{R}^{m\times m}$ is the \emph{angular velocity tensor}, e.g., for the 3D case $W = \left[\begin{smallmatrix}0 & -\omega_z & \omega_y \\ \omega_z & 0 & -\omega_x \\ -\omega_y & \omega_x & 0\end{smallmatrix}\right]$, with $\omega_{\{x,y,z\}}=1$ rads/sec being the angular velocities around the three Cartesian axes. Note that $M_r$ in (\ref{eq: murot}) is the superposition of all the orthogonal angular velocity tensors forming a basis, i.e., for 2D we can only rotate around the \emph{vertical} axis but in 3D we have three possible rotations. Thirdly, for the scaling of $p^*$, the design of $M_s$ satisfies
\begin{equation}
	\overline M_s\overline B^Tp^* = p^*,
	\label{eq: musca}
\end{equation}
where the \emph{scaling velocity tensor} is the identity matrix.
Finally, for the shearing motion of $p^*$, the following is satisfied
\begin{equation}
	\sum_{l=1}^{m}\overline M_{s_l}\overline B^Tp^* = (I_n \otimes S)p^*,
	\label{eq: mushe}
\end{equation}
where $S\in\mathbb{R}^{m\times m}$ is the \emph{shearing velocity tensor}. For example, for the 2D case $S = \left[\begin{smallmatrix}0 & h_{xy} \\ h_{yx} & 0\end{smallmatrix}\right]$ with $h_{xy}=h_{yx}=1$ distance units/sec defining the shearing velocity parallel to the $x$-axis and $y$-axis (in $O_b$) respectively.
\begin{remark}
As an example of non-orthogonality between the proposed motions, one can achieve a 2D rotational motion by combining two shearing velocities, e.g., by setting $\kappa_{s_1} = -\kappa_{s_2} = 1$. Nevertheless, the chosen collective motions,  albeit non-orthogonal, form a basis for all the affine motions and they have a straightforward physical meaning. A more detailed analysis on an orthogonal basis for the velocities will be covered in the journal extension of this work.
\end{remark}


Before the main result, we need one technical lemma. The following statement proves that if we apply the proposed motion technique to an affine transformation of the shape described by $p^*$, the result is another affine transformation of the shape described by $p^*$. It is particularly an affine transformation of the stacked designed velocities in $v^*_f$.
\begin{lemma}
	\label{lem: Av}
	Consider the affine transformation $\hat p = (I_n\otimes A)p^* + (\mathbf{1}\otimes b)$, for arbitrary $A\in\mathbb{R}^{m\times m}$ and $b\in\mathbb{R}^m$, then $\overline M\overline B^T \hat p = (I_n\otimes A)v_f^* \in\mathcal{S}.$
\end{lemma}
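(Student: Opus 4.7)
The plan is to split $\overline M \overline B^T \hat p$ using the affine decomposition of $\hat p$, kill the translational term with $B^T\mathbf{1}_n = 0$, and then move the matrix $A$ across the product using the mixed-product property of the Kronecker product. For the membership claim, I would invoke the basis decomposition (\ref{eq: ormo}) together with the identities (\ref{eq: mutran})--(\ref{eq: mushe}) to express $v_f^*$ itself as a point of $\mathcal{S}$, and then note that $\mathcal{S}$ is closed under the linear map $I_n\otimes A$.

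First, I would write $\overline M\overline B^T\hat p = \overline M\overline B^T(I_n\otimes A)p^* + \overline M\overline B^T(\mathbf{1}_n\otimes b)$. The second summand is zero because $\overline B^T(\mathbf{1}_n\otimes b) = (B^T\otimes I_m)(\mathbf{1}_n\otimes b) = (B^T\mathbf{1}_n)\otimes b = 0$, using the property $B^T\mathbf{1}_n = 0$ recorded just after (\ref{eq: B}).

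Next, for the first summand I would use the mixed-product rule of the Kronecker product to show the commutation
\begin{equation*}
\overline M\,\overline B^T(I_n\otimes A) = (M\otimes I_m)(B^T\otimes I_m)(I_n\otimes A) = MB^T\otimes A = (I_n\otimes A)(M\otimes I_m)(B^T\otimes I_m) = (I_n\otimes A)\,\overline M\,\overline B^T.
\end{equation*}
Applying this to $p^*$ and invoking (\ref{eq: vf}), namely $v_f^* = \overline M\overline B^T p^*$, yields $\overline M\overline B^T\hat p = (I_n\otimes A)v_f^*$, which is the first half of the claim.

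Finally, to establish $(I_n\otimes A)v_f^*\in\mathcal{S}$, I would use the decomposition (\ref{eq: ormo}) of $M$ together with (\ref{eq: mutran})--(\ref{eq: mushe}) to write $v_f^* = (I_n\otimes C)p^* + (\mathbf{1}_n\otimes d)$ for some $C\in\mathbb{R}^{m\times m}$ (a linear combination of $W$, $I_m$, and $S$) and some $d\in\mathbb{R}^m$ (a multiple of the translational $v^*$). Left-multiplying by $I_n\otimes A$ gives $(I_n\otimes AC)p^* + (\mathbf{1}_n\otimes Ab_0)$ where $b_0 = d$, which is manifestly of the form (\ref{eq: dshape}), hence in $\mathcal{S}$. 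The only delicate point is bookkeeping: checking that the four families of motion parameters each produce a term of the required affine form, but this is exactly what equations (\ref{eq: mutran})--(\ref{eq: mushe}) assert, so no real obstacle arises.
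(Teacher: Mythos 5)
Your proof is correct and follows essentially the same route as the paper: the mixed-product property to commute $(I_n\otimes A)$ past $\overline M\,\overline B^T$, followed by the design identities (\ref{eq: mutran})--(\ref{eq: mushe}) to place the result in $\mathcal{S}$. The only differences are organizational and slightly to your credit: you carry out the commutation once for the full $M$ rather than separately for $M_t$, $M_r$, $M_s$, $M_{s_l}$, and you explicitly justify discarding the offset term via $\overline B^T(\mathbf{1}_n\otimes b)=(B^T\mathbf{1}_n)\otimes b=0$, a step the paper's proof silently omits.
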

\begin{proof}
	The following identities exploit the \emph{mixed-product} property $(X_1 \otimes X_2)(X_3 \otimes X_4) = (X_1X_3 \otimes X_2X_4)$ for four matrices $X_{\{1,2,3,4\}}$ whose dimensions allow the latter matrix multiplications. In particular $(X_1 \otimes I)(I \otimes X_4) = (X_1 \otimes X_4) = (I \otimes X_4)(X_1 \otimes I)$ with the appropriate dimensions for the identity matrix.

Consider $M_t$ as the superposition (or sum) of all $M_{t_l}$ accounting for all the translational motions in (\ref{eq: mutran}), then we have that
\begin{align}
	\overline{M_t}\overline B^T\hat p &= (M_t \otimes I_m)(B^T \otimes I_m)(I_n \otimes A)p^* \nonumber \\
	&= (M_t \otimes I_m)(I_n \otimes A)(B^T \otimes I_m)p^* \nonumber \\
	&= (I_n \otimes A)(M_t \otimes I_m)(B^T \otimes I_m)p^* \nonumber \\
	&= (I_n \otimes A)(\mathbf{1}_n \otimes v^*) \nonumber \\
	&= \mathbf{1}_n \otimes Av^*,
\end{align}
	where, of course, $v^*$ is as in (\ref{eq: mutran}). We can conclude then that $\overline{M_t}\overline B^T\hat p\in\mathcal{S}$. Similarly, for all the rotational motions in (\ref{eq: murot}), we have that
\begin{align}
	\overline{M_r}\overline B^T\hat p &= (M_r \otimes I_m)(B^T \otimes I_m)(I_n \otimes A)p^* \nonumber \\
	&= (I_n \otimes A)(M_r \otimes I_m)(B^T \otimes I_m)p^* \nonumber \\
	&= (I_n \otimes A)(I_n \otimes W)p^* \nonumber \\
	&= (I_n \otimes AW)p^*,
\end{align}
	consequently, we have that $\overline{M_r}\overline B^T\hat p \in\mathcal{S}$. Accordingly, we also have for the scaling and shearing motions from (\ref{eq: musca}) and (\ref{eq: mushe}) that $\overline{M_s}\overline B^T\hat p = (I_n \otimes A)p^*\in\mathcal{S}$, and $\sum_{l=1}^{m}\overline{M_{s_l}}\overline B^T\hat p = (I_n \otimes AS)p^*\in\mathcal{S}$.
\end{proof}

\section{Stability analysis}
\label{sec: sta}
In this section, we replace the original weights in the affine formation controller (\ref{eq: uLaff}), that achieve a static shape in $\mathcal{S}$, with the modified weights in (\ref{eq: wmod}) designed from the motion parameters $\mu_{ij}$ derived in Section \ref{sec: mus}. We then present the following controller
\begin{equation}
	u_i = - h\sum_{j\in\mathcal{N}_i} \tilde w_{ij} (p_i - p_j).
	\label{eq: udismu}
\end{equation}
Similarly as in (\ref{eq: affcom}), using (\ref{eq: Ltilde}), we can arrive at the following compact form.
\begin{align}
	\dot p &= -h \overline{\tilde L}p = -h\overline Lp - \kappa \overline M \overline B^T p.
\label{eq: claffm}
\end{align}

Differently from \cite{de2020distributed} for the $\mathbb{C}^1$ case, here for the $\mathbb{R}^m$ case, we are not going to focus on the explicit solutions of the closed loop (\ref{eq: claffm}). We leave, for the extension of this work, the detailed analysis for the modification of the original $m^2+m$ zero eigenvalues of $\overline L$ and their corresponding eigenvectors. Nonetheless, we are going to show via Lypaunov that $p(t) \to \mathcal{S}$ and $\dot p(t) \to \kappa \overline M\overline B^T p(t)$ simultaneously as $t\to\infty$.

\begin{theorem}
	\label{th: main}
	Consider a generic reference configuration $p^*\in\mathbb{R}^{mn}$ and a framework $\mathcal{F}$ universally rigid such that we can find weights $w_{ij} = w_{ji}$ for the (positive semi-definite) $L$ as in (\ref{eq: L}), so that $\operatorname{Ker}\{\overline L\}$ is as in (\ref{eq: dshape}). Consider the control action (\ref{eq: udismu}) for the dynamics (\ref{eq: dyn}) where the modified weights are as in (\ref{eq: wmod}). If
	$$
h > \kappa \Big|\Big|Q\left(\overline{M}\overline B^T\right)^\dagger \Big|\Big|_2,
	$$
where the symbol $\dagger$ means that we take the last $(m(n-m-1) \times m(n-m-1))$ diagonal block of the square matrix, $Q$ is a positive definite matrix such that $QJ_2 + J_2^TQ = 2I$, and $J_2\in\mathbb{R}^{m(n-m-1)\times m(n-m-1)}$ is the Jordan form for the non-zero eigenvalues of $\overline L$, then $p(t) \to \mathcal{S}$ and $\dot p(t) \to \kappa \overline M\overline B^T p(t)$ as $t\to\infty$.
\end{theorem}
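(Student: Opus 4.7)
The overall plan is to decouple the dynamics \eqref{eq: claffm} along the invariant subspace $\mathcal{S} = \operatorname{Ker}\{\overline{L}\}$, where the designed collective motion lives, from a complementary ``error'' subspace on which a Lyapunov argument can be run. The structural input that makes this work is Lemma \ref{lem: Av}: it says $\overline{M}\overline{B}^T$ preserves $\mathcal{S}$, so in an adapted basis the error dynamics will decouple from the motion dynamics.

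I would first choose a non-singular $T = [T_1 \,\, T_2]$ whose first $m(m+1)$ columns form a basis of $\mathcal{S}$ and whose remaining $m(n-m-1)$ columns span any complement, so that $T^{-1}\overline{L}T = \operatorname{diag}(0, J_2)$ with $J_2$ as in the theorem. Lemma \ref{lem: Av} then forces the block form
\[
T^{-1}\overline{M}\overline{B}^T T = \begin{bmatrix} A_{11} & A_{12} \\ 0 & A_{22} \end{bmatrix},
\]
with lower-right block $A_{22} = (\overline{M}\overline{B}^T)^\dagger$ by the definition of the $\dagger$ operator. In coordinates $y = T^{-1}p = (y_1^T, y_2^T)^T$, the closed loop \eqref{eq: claffm} splits as
\[
\dot y_1 = -\kappa A_{11} y_1 - \kappa A_{12} y_2, \qquad \dot y_2 = -(h J_2 + \kappa A_{22}) y_2,
\]
so the deviation from $\mathcal{S}$, captured by $y_2$, evolves autonomously. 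Note that $y_1$ is not expected to converge: it realises the designed collective motion inside $\mathcal{S}$.

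Next, I would use the Lyapunov candidate $V(y_2) = y_2^T Q y_2$ together with the hypothesis $Q J_2 + J_2^T Q = 2I$ to get
\[
\dot V = -y_2^T\bigl[\,2hI + \kappa(Q A_{22} + A_{22}^T Q)\,\bigr] y_2.
\]
The spectral-norm bound $\|Q A_{22} + A_{22}^T Q\|_2 \leq 2\|Q A_{22}\|_2$ combined with $h > \kappa \|Q A_{22}\|_2 = \kappa\|Q(\overline{M}\overline{B}^T)^\dagger\|_2$ then yields $\dot V < 0$ whenever $y_2 \neq 0$, so $y_2(t) \to 0$ exponentially. This is equivalent to $p(t) \to \mathcal{S}$, whence $\overline{L}p(t) \to 0$ and the closed loop \eqref{eq: claffm} reduces in the limit to $\dot p(t) + \kappa \overline{M}\overline{B}^T p(t) \to 0$, delivering the second claim up to the sign convention of \eqref{eq: claffm}. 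The main conceptual hurdle is establishing the block-triangular structure in the first step; once Lemma \ref{lem: Av} provides it, the rest is a textbook Lyapunov estimate that reproduces exactly the scalar bound on $h$ stated in the theorem.
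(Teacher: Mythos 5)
Your proposal is correct and follows essentially the same route as the paper: decompose along $\mathcal{S}=\operatorname{Ker}\{\overline L\}$ and a complement, invoke Lemma \ref{lem: Av} to show $\overline M\overline B^T$ leaves $\mathcal{S}$ invariant so the error block decouples, and run the Lyapunov estimate $V=y_2^TQy_2$ with $QJ_2+J_2^TQ=2I$ to recover the stated bound on $h$. The only point to tighten is that the complement spanned by $T_2$ must be $\overline L$-invariant (e.g.\ $\mathcal{S}^\perp$, as the paper uses via the projections $P_{\mathcal{S}}$, $P_{\mathcal{S}^\perp}$ and the symmetry of $\overline L$) for $T^{-1}\overline L T$ to come out block-diagonal as $\operatorname{diag}(0,J_2)$; an arbitrary complement would only give a block-triangular form.
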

\begin{proof}
	We define $\mathcal{S}^\perp$ as the orthogonal space of $\mathcal{S}$, where both subspaces have dimensions $m(n-m-1)$ and $m^2+m$ respectively \cite{lin2016necessary}, and we remind that $\mathcal{S} = \operatorname{Ker}\{\overline L\}$. Let us split
\begin{equation}
	p = P_\mathcal{S}\,p + P_\mathcal{S^\perp}\,p = p_\parallel + p_\perp,
\end{equation}
where $P_\mathcal{X}$ stands for the projection matrix over the space $\mathcal{X}$. We are going to show the convergence of $p_\perp (t) \to 0$ as $t\to\infty$. Therefore, we will have that $p(t)$ has $p_\parallel$ as the only nonzero components eventually, i.e., $p(t)\to\mathcal{S}$, as $t\to\infty$. We write the dynamics for $p_\perp$ derived from the closed loop (\ref{eq: claffm}) as
\begin{equation}
	\dot p_\perp = -hP_\mathcal{S^\perp}\overline L(p_\parallel + p_\perp) + \kappa P_\mathcal{S^\perp} \overline M\overline B^T(p_\parallel + p_\perp).
\label{eq: aux98}
\end{equation}
According to Lemma \ref{lem: Av}, we have that
	\begin{equation}
	\kappa(\overline{M}\overline B^T)\big((\mathbf{1}_n \otimes A)p^* + \mathbf{1}_n \otimes b\big) \in \mathcal{S},
	\end{equation}
	i.e., $\kappa(\overline{M}\overline B^T)p_\parallel \in \mathcal{S}$. Together with $\overline L p_\perp \in \mathcal{S}^\perp$, and $\overline L p_\parallel = 0$, we can further simplify (\ref{eq: aux98}) as
\begin{equation}
\dot p_\perp = -h\overline L p_\perp + \kappa P_\mathcal{S^\perp} \overline{M}\overline B^T p_\perp.
	\label{eq: aux103}
\end{equation}
	Now consider the Jordan form $J\in\mathbb{R}^{mn\times mn}$ of $\overline L$, i.e., $T\overline L T^{-1} = \left[\begin{smallmatrix}J_1 & 0 \\ 0 & J_2\end{smallmatrix}\right]$ for some invertible matrix $T$, and $J_1\in\mathbb{R}^{(m^2+m)\times (m^2+m)}$ and $J_2\in\mathbb{R}^{m(n-m-1) \times m(n-m-1)}$, where we consider $J_1$ the zero matrix corresponding to the zero eigenvalues of $\overline L$. Let us apply the change of coordinates $T$ to $p$, i.e.,
\begin{equation}
\begin{bmatrix}q_1 \\ q_2\end{bmatrix} = Tp,
\end{equation}
	with $q_1\in\mathbb{R}^{m^2+m}$ and $q_2\in\mathbb{R}^{m(n-m-1)}$. Note that $Tp_\parallel = \begin{bmatrix}q_1^T & 0\end{bmatrix}^T$ and  $Tp_\perp = \begin{bmatrix}0 & q_2^T\end{bmatrix}^T$. Then, by applying the same coordinate transformation to (\ref{eq: aux103}), we have that
\begin{align}
	\frac{\mathrm{d}}{\mathrm{dt}}\begin{bmatrix}0 \\ q_2\end{bmatrix} &= -hT\overline LT^{-1}\begin{bmatrix}0 \\ q_2\end{bmatrix} + \kappa TP_{\mathcal{S}^\perp}\overline{M}\overline B^TT^{-1}\begin{bmatrix}0 \\ q_2\end{bmatrix} \nonumber \\
		\dot q_2 &= -hJ_2 q_2 + \kappa \left(TP_{\mathcal{S}^\perp}\overline{M}\overline B^TT^{-1}\right)^\dagger q_2,
	\label{eq: aux105}
\end{align}
	where the symbol $\dagger$ means that we take the last $(m(n-m-1) \times m(n-m-1))$ diagonal block of the matrix to accommodate for the dimensions of $q_2$. If we show that $q_2(t)\to 0$ as $t\to\infty$, then $p_\perp(t)\to 0$ as $t\to\infty$. Let us choose the following Lyapunov function $V = q_2^TQq_2$, where $Q$ is a positive definite matrix such that $QJ_2 + J_2^TQ = 2I_{m(n-m-1)}$. Then, the time derivative of $V$ satisfies
\begin{align}
	\frac{\mathrm{d}V}{\mathrm{dt}} \leq -2h||q_2||^2 +2\kappa \, ||Q\left(TP_{\mathcal{S}^\perp} \overline M\overline B^TT^{-1}\right)^\dagger ||_2 \, ||q_2||^2. \nonumber
\end{align}
If we exploit the fact that to do and undo the change of coordinates with $T$ does not change the norm of a vector, and that the projection matrix does not make bigger the norm of a vector, then if we choose $h$ such that
\begin{equation}
	h > \kappa \Big|\Big|Q\left(\overline{M}\overline B^T\right)^\dagger \Big|\Big|_2, \label{eq: kapaf}
\end{equation}
we have that $p_\perp(t) \to 0$ as $t \to \infty$ exponentially fast, i.e., $p(t)\to p_{\parallel}(t)\in\mathcal{S}$ as $t\to\infty$. Therefore, we conclude that if $h$ satisfies (\ref{eq: kapaf}), then, we can deduce from the closed loop (\ref{eq: claffm}) that
\begin{equation}
	\begin{cases}
		p(t) \to & p_{\parallel}(t) \\
		\dot p(t) \to & \kappa \overline{M}\overline B^Tp(t) = \kappa \overline{M}\overline B^Tp_{\parallel}(t)
	\end{cases}, \, t\to\infty
	\label{eq: velaff}
\end{equation}
\end{proof}

Theorem \ref{th: main} has shown us that the dynamics of $p(t)$ converges to the linear system (\ref{eq: velaff}) together with $p(t)\to\mathcal{S}$. Consequently, we can deduce the eventual collective behavior by analyzing the linear system (\ref{eq: velaff}) whose initial condition is a configuration in $\mathcal{S}$. Note that the global convergence to the desired collective behavior is guaranteed.

Physically, we can understand the results of Theorem \ref{th: main} better if we split $p(t)$ into two configurations, namely, $p_c(t)$ and $p_{\text{c.m.}}(t)$. The configuration $p_c(t)$ is as in (\ref{eq: pstar}) (we remind that we defined as reference shape $p^* = p_c^*$) with collective motions that keep its centroid fixed, and the \emph{centroid configuration} $p_{\text{c.m.}}(t)$ travels depending on the actual $p_c(t)$. Hence, the eventual collective behavior can be formally expressed as
\begin{equation}
\begin{cases}
	p(t) &= p_{\text{c.m.}}(t) + p_c(t) \\
	\dot p_c(t) &= \kappa\big(I_n \otimes (W + I_m + H)\big)\overline B^Tp_c(t) \\
	\dot p_{\text{c.m.}}(t) &= \kappa\overline M_t\overline B^T p_c(t) \\
	p_c(0)&\in\{(I_n\otimes A)p^* \, | \, A\in\mathbb{R}^{m\times m}\}\\
	p_{\text{c.m.}}(0) &\in \{\mathbf{1}\otimes b \, | \, b\in\mathbb{R}^m \}
\end{cases},
	\label{eq: linsys}
\end{equation}
where the particular $A$ and $b$ to pick in (\ref{eq: linsys}) will depend on the initial condition $p(0)$ in (\ref{eq: claffm}), and for the sake of simplicity we have assumed that all the coordinates $\kappa_*$ (see Subsection \ref{sec: beh}) equal one.

\section{Numerical simulations}
\label{sec: sim}
We choose as a reference shape $p^*$ the one displayed in Figure \ref{fig: pstar}. In particular, the separations between agents in the horizontal and vertical axes are equal to $1$. We create an universally rigid framework by setting the collection of edges as
\begin{align}
	\mathcal{Z} &= \{(1,2),(1,3),(1,4),(1,5),(2,4),(2,7),(3,5) \nonumber \\ & (3,6),(4,5),(4,6),(5,7),(6,8),(7,8),(4,8),(5,8)\}. \nonumber
\end{align}
where the weights $w_{ij}=w_{ji}$ have been calculated following the algorithm in \cite{zhao2018affine}. We describe in detail four simulations with collective behaviors based on $p^*$ in the captions of Figures \ref{fig: aff1}, \ref{fig: aff2}, \ref{fig: aff3}, and \ref{fig: aff4}.

\section{Discussion and future work}
\label{sec: dis}
We have presented how to induce collective motions in affine formation control. These collective behaviors do not require leaders but to modify the original weights responsible for only a static configuration. As illustrated in the simulations, we can exploit these behaviors to rendezvous the agents with a particular shape, enclose a point of interest, or cover an area. Similarly as in \cite{de2020distributed}, future work will focus on obtaining the explicit solutions to the closed-loop system by analyzing eigenvalues and eigenvectors of $\overline {\tilde L}$ in (\ref{eq: claffm}).

\begin{figure}
\centering
\includegraphics[width=1.0\columnwidth]{./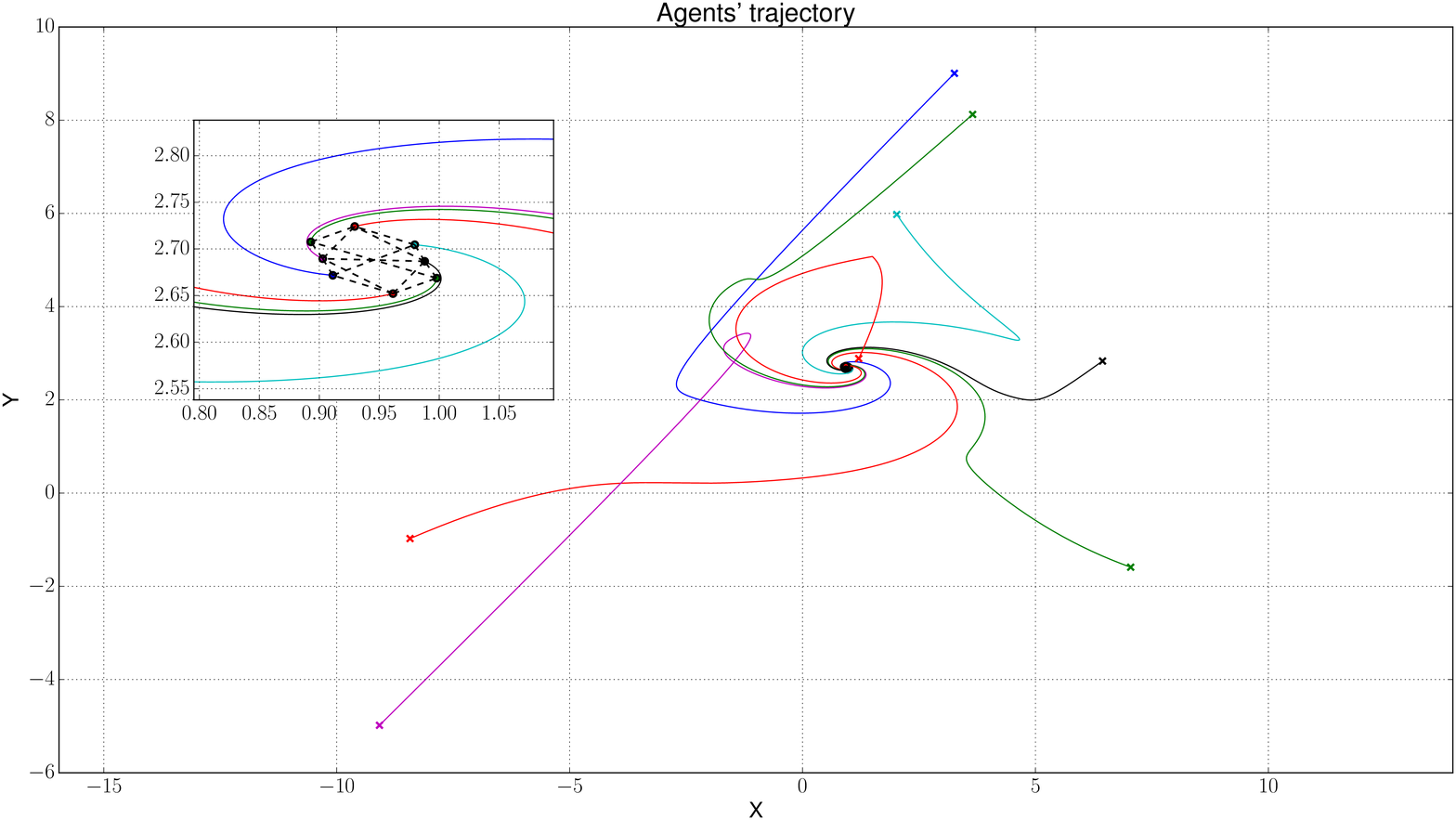}
	\caption{We take the reference shape $p^*$ as in Figure \ref{fig: pstar}. The crosses and the dots denote for the initial and the $t = 250$ secs positions. The dashed lines denote the edges of the graph. In (\ref{eq: ormo}), we set all the coordinates to zero except that $\kappa_s = \kappa_r = -1$. This collective behavior can be regarded as the \emph{shaped consensus} where the formation rendezvous while describing an affine version of $p^*$. Note that the rotation is not circular but ellipsoidal, and together with the negative scaling the agents describe an inwards spiral.}
\label{fig: aff1}
\end{figure}

\begin{figure}
\centering
\includegraphics[width=1\columnwidth]{./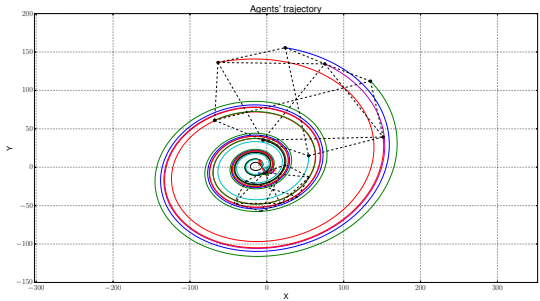}
	\caption{We take the reference shape $p^*$ as in Figure \ref{fig: pstar}. The crosses and the dots denote for the initial and the $t = 250$ secs positions. The dashed lines denote the edges of the graph. We set $\kappa_s = \kappa_r = \kappa_{t_1} = 1$, while the rest coordinates in (\ref{eq: ormo}) are zero. Recall that the scaling velocity tensor defines an exponentially growing speed. Since the translational velocity of the formation depends on the size, it also grows exponentially fast. We captured this fact by showing another snapshot of the formation with lower scale at $t = 175$ secs. Note that between $t=175$ and $t=250$ seconds the formation travelled more distance than between $t=0$ and $t=175$ secs.}
\label{fig: aff2}
\end{figure}

\begin{figure}
\centering
\includegraphics[width=1.0\columnwidth]{./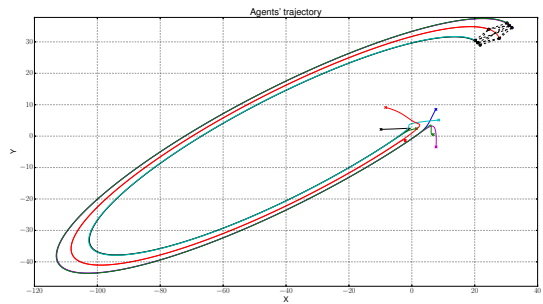}
	\caption{We take the reference shape $p^*$ as in Figure \ref{fig: pstar}. The crosses and the dots denote for the initial and the $t = 350$ secs positions. The dashed lines denote the edges of the graph. We set the coordinates $\kappa_r = \kappa_{t_1} = 1$, while the rest are zero in (\ref{eq: ormo}). The formation converges to a closed ellipsoidal orbit which is an affine transformation of the designed circular trajectory.}
\label{fig: aff3}
\end{figure}

\begin{figure}
\centering
\includegraphics[width=1\columnwidth]{./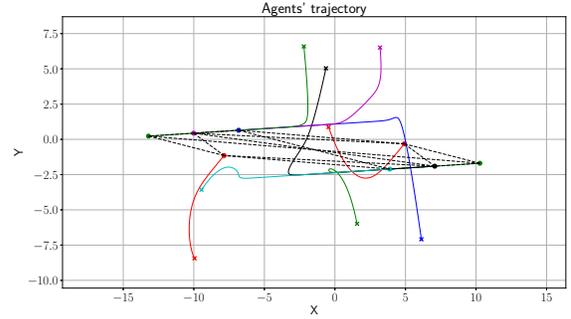}
	\caption{We take the reference shape $p^*$ as in Figure \ref{fig: pstar}. The crosses and the dots denote for the initial and the $t = 250$ secs positions. The dashed lines denote the edges of the graph. We design a positive shearing speed in the horizontal axis of $p^*$, i.e., we set $\kappa_{s_1} = 1$ while the rest of coordinates are zero in (\ref{eq: ormo}). Note that the agents in red color (4 and 8) are on the horizontal axis $y=0$ in $p^*$, consequently they stop. The rest of the agents follow a constant velocity. In particular, the agents 1,2 and 3 follow the same velocity with the same speed since they are at the same $y = 1$ coordinate in $p^*$. The agents 5,6 and 7 have a (parallel) velocity with the same speed as the other three agents but with opposite sign since they are at the $y = -1$ coordinate in $p^*$. The collective behavior is the stretching of one of the affine transformations of $p^*$.}
\label{fig: aff4}
\end{figure}

	\section*{Acknowledgments}
	The work of H.G. de Marina is supported by the grant Atraccion de Talento 2019-T2/TIC-13503 from the Government of Madrid, and it has been partially supported by the Spanish Ministry of Science and Innovation under research Grant RTI2018-098962-B-C21.
	\bibliographystyle{IEEEtran}
	\bibliography{Bibs}
	
\end{document}